\newcommand{\ones}{\mathbf 1}
\newcommand{\reals}{{\mathbb{R}}}
\newcommand{\rank}{\mathop{\bf rank}}
\newcommand{\norm}[1]{\left\lVert#1\right\rVert}
\newcommand{\mnorm}[1]{{\left\vert\kern-0.25ex\left\vert\kern-0.25ex\left\vert #1 
    \right\vert\kern-0.25ex\right\vert\kern-0.25ex\right\vert}}
\newcommand{\mc}{\mathcal}
\newcommand{\G}{\mathcal{G}}
\newcommand{\E}{\mathcal{E}}
\newtheorem{definition}{Definition} 
\newtheorem{theorem}{Theorem}
\newtheorem{lemma}{Lemma}
\newtheorem{corollary}{Corollary}
\newtheorem{proposition}{Proposition}
\newtheorem{assumption}{Assumption}
\algnewcommand\algorithmicforeach{\textbf{for each}}
\begin{document}
\title{\Large \bf Sensitivity Analysis for Markov Decision Process Congestion Games}
\author{Sarah H. Q. Li$^{1}$, Daniel Calderone$^{2}$, Lillian Ratliff$^{2}$, Beh\c cet\ A\c c\i kme\c se$^{1}$
\thanks{*This work was is supported by NSF award CNS-1736582.}
\thanks{$^{1}$Authors are with the William E. Boeing  Department of Aeronautics and Astronautics, University of Washington, Seattle. 
        {\tt\small sarahli@uw.edu}
        {\tt\small behcet@uw.edu}}%
\thanks{$^{2}$Authors are with the Department of Electrical Engineering, University of Washington, Seattle.
        {\tt\small djcal@uw.edu }
        {\tt\small ratliffl@uw.edu }}%
}
\maketitle
\begin{abstract}
We consider a non-atomic congestion game where each decision maker performs selfish optimization over states of a common MDP. The decision makers optimize for their own expected cost, and influence each other through congestion effects on the state-action costs. 
We analyze the sensitivity of MDP congestion game equilibria to uncertainty and perturbations in the state-action costs by applying an implicit function type analysis. The occurrence of a stochastic Braess paradox is defined and analyzed based on sensitivity of game equilibria and demonstrated in simulation. We further analyze how the introduction of stochastic dynamics affects the magnitude of Braess paradox in comparison to  deterministic dynamics. 
\end{abstract}

\section{Introduction}\label{sec:introduction}
\emph{Markov decision process (MDP) congestion games} have been successfully used to model distributions of selfish decision makers when competing for finite resources~\cite{calderone2017markov}. In particular, MDP congestion games introduce stochastic dynamics in congestion games by mapping user inputs to probabilistic outcomes. An equilibrium concept similar to Wardrop equilibrium of routing games \cite{calderone2017infinite}, \emph{MDP Wardrop equilibrium} describes steady-state population behaviour at which no players can optimize their expected state-action costs through further changes in their decision strategies. 

In modelling a physical process as a game, the game equilibrium \emph{approximates} the true steady-state of the physical process; this is because models inherently cannot predict the physical process to full accuracy. The underlying assumption is that the modelling errors cause negligible deviations of prediction from physical equilibrium. However, this is false if the steady-state distribution is \emph{sensitive} to changes in the modelling parameters. This motivates our study of sensitivity of MDP congestion game to state-action costs.    

In this paper, we quantify sensitivity for the occurrence of \emph{stochastic Braess paradox}, and relate the paradox to its deterministic counterpart. We also define and derive conditions for MDP dynamics and state-action costs under which our sensitivity analysis is valid. Finally 
we bound the sensitivity of a stochastic MDP congestion game in terms of the sensitivity of its deterministic counterpart. 

Here we'd also like to emphasize why we consider the sensitivity of Wardrop equilibrium to the state-action cost parameters. 
In utilizing MDP congestion game models to forecast steady-state behaviour of a physical system, state-action costs are often parameterized by experimental data, which has uncertainty. When this uncertainty is bounded, it is natural to consider bounding the resulting deviation of true equilibrium from the predicted equilibrium. Secondly, the sensitivity of game equilibrium is highly relevant to Stackelberg games for the  leader, who may utilize the sensitivity information to derive an optimal action sequence for its own objective~\cite{shiau2009optimal}. Finally, when a game designer with a certain `budget' for changing the cost function attempts to alter an existing game equilibrium to maximize an external objective, it's important to know the optimal change with respect to designer's alternative objective.

We review existing literature on sensitivity and MDP congestion games in section~\ref{sec:related work}. In section~\ref{sec:preliminaries}, MDP congestion game and related concepts are defined. Sensitivity results and stochastic Braess paradox characterizations are given in section~\ref{sec:sensitivityAnalysis}. We analyze stochasticity's effect on paradox sensitivity in section~\ref{sec:stochasticity}. Finally, simulations demonstrating stochastic Braess paradox and the sensitivity analysis is shown in section~\ref{sec:wheatstoneExamples}. 
\section{Related work}\label{sec:related work}
MDP congestion games \cite{calderone2017markov,calderone2017infinite} combine features of non-atomic routing games  \cite{wardrop1952,beckmann1952continuous,patriksson2015traffic}, i.e.~where decision makers influence each other's edge costs through congestion effects over a network---and \emph{stochastic games}~\cite{shapley1953stochastic,mertens1981stochastic}---i.e.~where each decision maker solves an MDP. 

Our analysis resembles sensitivity work on Wardrop equilibria in traffic assignment literature~\cite{Tobin1988,qiu1992sensitivity,Patriksson2004}, where extensive research exist on both the sensitivity of Wardrop equilibria and a related problem of network design with respect to optimal user equilibrium~\cite{yamada2015freight, bar2013computational}. Efficiency of Wardrop equilibria leads to a paradoxical phenomenon known as Braess paradox~\cite{braess1968paradoxon}, whose occurrence is linked to the underlying network of system dynamics~\cite{milchtaich2006network}. 

To incorporate randomness in the traffic assignment model, a variety of probabilistic models were analyzed. Approximation algorithms have been derived for networks where uncertainty exists in user demand~\cite{ukkusuri2007robust}, in user dynamics as logit model~\cite{liu2015global}, and in perceived cost function as normal error distribution~\cite{clark2002sensitivity}. Sensitivity of other network games to modelling parameters have also been studied in~\cite{parise2019variational}. Our work is fundamentally different from previous work due to our assumption: we consider exclusively on uncertain dynamics, and instead of modelling uncertainty with explicit probability distributions, we describe dynamics with MDPs, which can be interpreted as a discretization of an arbitrary probability distribution. The addition of MDP dynamics then requires additional treatment as described in later sections. 
\section{Preliminaries}\label{sec:preliminaries}
We introduce MDP congestion game framework from an individual decision maker's perspective and define a variational inequality-style game equilibria. From a system-level perspective, MDP congestion game is formulated as a potential game with a hypergraph structure.
The set $\{1,\ldots, N\}$ is denoted by $[N]$ and the vector $[1,\ldots 1] \in \reals^{N}$ by $\ones_N$.

\subsection{MDP Congestion game}
In an archetypal finite MDP problem, each decision maker solves a finite-horizon MDP~\cite{altman1999constrained} with horizon length $T$, state space $[S]$, and action space $[A]$ given by
\begin{equation}\label{eqn:individualCG}
\begin{aligned}
\underset{x_{sa}}{\min} & \ \underset{s \in [S]}{\sum}\underset{a \in [A]}{\sum} x_{sa}c_{sa} \\
\text{s.t.} &\  \sum_s \sum_a x_{sa} = 1, \\
&\ \sum_{a} x_{sa} = \underset{s' \in [S]}{\sum}\underset{a\in[A]}{\sum}P_{ss'a}, \ \forall \ s \in [S], \\ 
& \ x_{sa} \geq 0,  \quad \forall \ s \in [S], a \in [A],
\end{aligned}
\end{equation}
where the objective is to minimize the expected average cost over an infinite time horizon with a finite set of actions $[A]$ and a finite set of states $[S]$.
The optimization variable $x\in \reals_+^{SA}$ defines a state-action distribution of an individual decision maker, such that $x_{sa}/\sum_{a'\in[A]}x_{sa'}$ denotes a decision maker's probability of taking action $a$ at state $s$.

The probability kernel $P \in \reals_+^{S \times SA }$ has form 
\[P = \begin{pmatrix}P_{s_1s_1a_1} &  P_{s_1s_1a_2} & \ldots &P_{s_1s_2a_1} & \ldots & P_{s_1s_Sa_A} \\
P_{s_2s_1a_1} &  P_{s_2s_1a_2} & \ldots&P_{s_2s_2a_1} & \ldots & P_{s_2s_Sa_A} \\
\vdots \\
P_{s_ns_1a_1} &  P_{s_ns_1a_2} & \ldots&P_{s_ns_2a_1} & \ldots & P_{s_ns_Sa_A}
\end{pmatrix}, \]
where $P_{ss'a}$ denotes the transition probability from state $s'$ to $s$ when taking action $a$. $P$ is column stochastic and  defines the transition dynamics.

In a non-atomic MDP congestion game, an infinite number of decision makers each solves an MDP on the same state-action space. The total population distribution is described by $y \in \reals^{SA}_+$. 

\begin{assumption}[Mean Field Assumption]\label{ass:meanField}
In the limit where the number of decision makers approaches to infinity, the total population becomes a continuous distribution $y \in \reals_+^{SA}$ with total mass $M > 0$, where $y_{sa}$ denotes the portion of population who chooses action $a$ at state $s$.
\end{assumption}
The \emph{population distribution} $y$ relates to  individual state-action distribution by $y = \sum_k \alpha_k x(k),\,\,
    \sum_{k\in\mc{K}}\alpha_k = M, \,\, \alpha_k > 0, \,\, \forall \ k \in \mc{K}, $
where $\mc{K}$ is the index set of feasible distributions with respect to MDP~\eqref{eqn:individualCG}, and $\alpha_k$ corresponds to the portion of population that chooses distribution $x(k)$.

Assumption~\ref{ass:meanField} results in a \emph{non-atomic} nature of MDP congestion games: each decision maker's state-action distribution is infinitesimal with respect to the population distribution, and changes in an individual $x$ does not affect $y$. 

In an MDP congestion game, the state-action costs $c_{sa}$ are population dependent functions, i.e.,
$c_{sa} = \ell_{sa}(y_{sa})$, where $\ell_{sa}: \reals_+ \rightarrow \reals$.
We denote $\ell: \reals^{SA}_+ \rightarrow \reals^{SA}$ as the vector of state-action costs. The population dependency of $\ell$ reflects \emph{congestion effects}: the greater the population in a given state-action pair, the greater the cost of taking that state-action for all decision makers. This assumption is consistent with practical networked interactions in traffic and telecommunications~\cite{bureau1964traffic} where, e.g., the cost of traversing a road increases for each driver when the number of cars on the road increases. 
\begin{assumption}
\label{ass:increasing}
The state-action costs $\ell: \reals^{SA}_+ \rightarrow \reals^{SA}$ are continuously differentiable and $\nabla_y\ell$ is positive definite.
\end{assumption}
In an MDP congestion game, all decision makers achieve their optimal expected cost when the population distribution is at \emph{MDP Wardrop equilibrium}.
\begin{definition}[MDP Wardrop Equilibrium~\cite{calderone2017infinite}] \label{def:VI_mdpWE}
A population distribution $y^\star$ which satisfies Assumption~\ref{ass:meanField} is a Wardrop equilibrium when each decision maker's probability $x^\star(k)$ satisfies
\[\sum_{s\in[S]}\sum_{a\in[A]}\ell_{sa}(y^\star_{sa})(x^\star(k)_{sa} - x_{sa}) \leq 0, \  \forall k \in \mathcal{K}.\]

\end{definition}
Definition~\ref{def:VI_mdpWE} defines optimality in terms of expected cost: an individual decision maker deviating from its current strategy will not achieve a more optimal expected cost. 
\subsection{Directed Hypergraphs}
Similar to stochastic shortest path problems~\cite{epstein2009efficient}, MDP congestion game is inherently related to hypergraphs~\cite{gallo1993directed}. 
We consider a weighted directed hypergraph $\G = ([S], \E)$, where $[S]$ is the set of states considered in MDP congestion game and $\E$ is the set of \emph{hyperarcs}. A \emph{hyperarc} $(s,a)$ is defined for each state-action pair, such that the tail is always at $s$, and the head, $\mc{H}(s,a)$, is the set of states that can be reached from state $s$ taking action $a$---i.e., 
$\mc{H}(s,a) = \{s' \in [S]\ |\ P_{s'sa} > 0\}$. 
\begin{figure}[h]
    \centering
    \includegraphics[width=0.45\columnwidth]{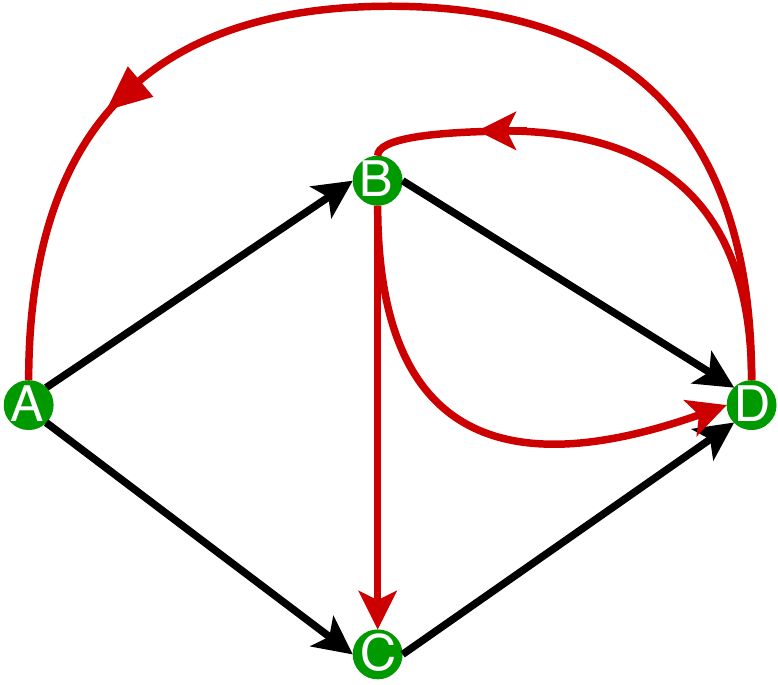}
    \caption{A directed hypergraph with 4 states. The hyperarcs in red have one tail but multiple heads, denoting possible states $s'$ that taking state-action $(s,a)$ may result in next.}
    \label{fig:hypergraph}
\end{figure}

A \emph{hypergraph incidence} matrix $E \in \reals^{S\times |\E|}$ has elements defined as 
\begin{equation}\label{eqn:weightedIncidence}
    (E)_{s', (s,a) } = \begin{cases}
        1 &   s' = s,\\
        -P_{s'sa} &  s' \neq s.
    \end{cases}
\end{equation}
Alternatively the incidence matrix can be written as $E = (I_S \otimes \ones_A^T - P)$.
In this form, we can see that the difference in probability density per state (i.e., $(I_S \otimes \ones_A^T) x$) before and after a stochastic transition (i.e., $Px$) can be written as $Ex$. Therefore a stationary distribution $\hat{x}$ always satisfies $E\hat{x} = 0$. 


A directed hypergraph is \emph{strongly connected} if every non-empty subset $\mc{R}\subset [S]$ has at least one incoming hyperarc from the set $[S]/\mc{R}$. 
In the following  consider hypergraphs whose incidence matrix has rank $S-1$. 
\begin{assumption}[Incidence Rank]
\label{ass:incidenceRank}
The hypergraph that corresponds to probability transition kernel $P$ is strongly connected, and its incidence matrix $E$ has row rank $S - 1$. 
\end{assumption}

An MDP congestion game can be stated as an optimization problem over population distribution $y$~\cite{calderone2017infinite}, formulated as
\begin{subequations}
\begin{align}
\underset{y}{\min} & \quad \sum\limits_{s\in[S]} \sum\limits_{a\in[A]} \int_0^{y_{sa}}\ell_{sa}(u)du\\
\mbox{s.t.} & \quad Ey = 0,\label{eqn:mdpcg_conservation}\\ 
& \quad \ones^Ty = M,\label{eqn:mdpcg_totalMass}\\
& \quad y \geq 0,  \label{eqn:mdpcg_positivity}
\end{align}
\label{eqn:mdpgame}
\end{subequations}
where constraints on $y$ is derived from feasibility conditions of individual decision makers.

Let $\nu$, $\lambda$, $\mu$  be Lagrange multipliers corresponding to~\eqref{eqn:mdpcg_conservation}, \eqref{eqn:mdpcg_totalMass}, \eqref{eqn:mdpcg_positivity}, respectively. 
When $\ell$ satisfies Assumption~\ref{ass:increasing}, uniqueness of the tuple $(y^\star,\lambda^\star, \mu^\star)$ is guaranteed~\cite{calderone2017infinite}. However due to the rank deficiency of $E^T$, $\nu^\star$ must be non-unique. We show next that the constraint $Ey = 0$ can be reduced to a full ranked condition, such that the corresponding optimal Lagrange multiplier $\nu^\star$ is unique.  
\begin{lemma}[Full Row Rank Incidence Matrix]
\label{lem:fullRankIncidence}
An MDP congestion game~\eqref{eqn:mdpgame} that satisfies Assumption~\ref{ass:incidenceRank} is equivalent to 
\begin{equation}
\begin{aligned}
\underset{y}{\min} & \quad \sum\limits_{s\in[S]} \sum\limits_{a\in[A]} \int_0^{y_{sa}}\ell_{sa}(x)dx \\
\mbox{s.t.} & \quad \Tilde{E}y = 0,\\ 
& \quad \ones^Ty = M,\\
& \quad y \geq 0, \\
\end{aligned}
\label{eqn:fullRankMdpGame}
\end{equation}
where $E = \begin{bmatrix}\Tilde{E}\\ e^T\end{bmatrix}$ and $\Tilde{E}$ has full row rank.
\end{lemma}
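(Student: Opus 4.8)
The plan is to make the deleted row $e^T$ explicit and then verify the two properties the lemma requires: that the remaining block $\tilde E$ has full row rank, and that the feasible sets of~\eqref{eqn:mdpgame} and~\eqref{eqn:fullRankMdpGame} coincide. Since the two programs have identical objectives, once the feasible sets agree the equivalence (same optimal value, same minimizers) is immediate.

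First I would record the unique linear dependence among the rows of $E$. From $E = I_S \otimes \ones_A^T - P$ and column stochasticity of $P$ one gets $\ones_S^T(I_S \otimes \ones_A^T) = \ones_{SA}^T$ and $\ones_S^T P = \ones_{SA}^T$, hence $\ones_S^T E = 0$: the rows of $E$ sum to the zero vector. Consequently any single row — after relabeling states, say the last one, which I call $e^T$ — equals the negative of the sum of the other $S-1$ rows; let $\tilde E$ be the matrix of those rows, so $E = \begin{bmatrix}\tilde E \\ e^T\end{bmatrix}$.

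Next I would argue $\tilde E$ has full row rank. Assumption~\ref{ass:incidenceRank} gives $\mathrm{rank}(E) = S-1$, so the left null space $\{z : z^T E = 0\}$ is one-dimensional; since $\ones_S$ lies in it by the previous step, the left null space is exactly $\mathrm{span}(\ones_S)$. If some combination $\sum_{s=1}^{S-1} z_s e_s^T$ of the rows of $\tilde E$ vanished, then $(z_1,\dots,z_{S-1},0)^T$ would be a left null vector of $E$, hence a scalar multiple of $\ones_S$; its last coordinate being $0$ forces that scalar, and therefore every $z_s$, to be $0$. So the rows of $\tilde E$ are linearly independent.

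Finally I would check the feasible sets. The inclusion $\{y : Ey = 0\} \subseteq \{y : \tilde E y = 0\}$ is trivial. For the converse, if $\tilde E y = 0$ then by the row-sum identity $e^T y = -\ones_{S-1}^T \tilde E y = 0$, so $Ey = 0$; intersecting with $\ones^T y = M$ and $y \geq 0$ shows the two feasible sets are the same, and the lemma follows. I do not expect a genuine obstacle: the only real content is the identity $\ones_S^T E = 0$ combined with the rank hypothesis, which together pin down the left null space, and everything else is bookkeeping. The one subtlety worth noting is that strong connectivity in Assumption~\ref{ass:incidenceRank} plays no role in this reduction — it is what guarantees $\mathrm{rank}(E) = S-1$ in the first place rather than being used here.
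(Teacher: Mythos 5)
Your proof is correct and follows essentially the same route as the paper's: both rest on the identity $\ones_S^T E = 0$ (from column stochasticity of $P$), which makes the deleted row the negative sum of the remaining ones and hence recovers $e^T y = 0$ from $\tilde{E} y = 0$. Your left-null-space argument for the full row rank of $\tilde{E}$ is slightly more explicit than the paper's (which merely notes $e^T \neq 0$ and relies on $\mathrm{rank}(E) = S-1$), but it is the same underlying reasoning.
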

\begin{proof}
Consider removing arbitrary row vector $e^T$ from the incidence matrix $E$. By Assumption~\ref{ass:incidenceRank}, $e^T$ is not identically $0$. 
Clearly, $Ey = 0$ implies $\Tilde{E}y = 0$. To see that the opposite implication, note that $E^T\ones = 0$ from definition leads to $\ones^T\Tilde{E} = -e^T$. Therefore $\Tilde{E}y = 0$ implies $-e^Ty = 0$.\end{proof}
The Karush-Kuhn-Tucker (KKT) conditions of~\eqref{eqn:fullRankMdpGame} are
\begin{equation}\label{eqn:kkt_mdpcg}
\begin{aligned}
H(y^\star, \nu^\star, \lambda^\star, \mu^\star) = 
\begin{bmatrix}
\ell(y^\star) - \Tilde{E}^T\nu^\star -\lambda^\star\ones- \mu^\star &\\
\Tilde{E}y^\star&\\
\ones^Ty^\star - M&\\
(\mu^\star)^T y^\star &\\
\end{bmatrix}& = \begin{bmatrix}0\\0\\0\\ 0\end{bmatrix}, \\
\mu^\star \geq 0, y^\star & \geq 0. \\
\end{aligned}
\end{equation}
where $\nu \in \reals^{S-1}$, $\lambda \in \reals$, $\mu \in \reals_+^{SA}$ are uniquely determined for a given population distribution $y$.
\section{Sensitivity Analysis}
\label{sec:sensitivityAnalysis}
In this section, we derive a sensitivity characterization of stochastic Braess paradox. 
To facilitate the analysis, we introduce \emph{perturbation dependent} cost functions $\ell: \reals^{SA} \times\reals^{SA} \rightarrow \reals^{SA}$ that is continuously differentiable in both inputs, where the additional input represents perturbation to the cost function. The game itself is played with respect to a given perturbation $\epsilon$ and a corresponding cost $\ell(\cdot,\epsilon)$. 

The KKT conditions~\eqref{eqn:kkt_mdpcg} can also be viewed as an implicit characterization of optimal population $y^\star$ as parameterized by $\epsilon$. We define a \emph{point-to-set} mapping given by
\begin{equation}\label{eqn:point2SetMapping}
    Q : \epsilon \mapsto \left\{ (y,\nu,\lambda,\mu) | \ H(\epsilon, \lambda, \nu, y, \mu) = 0 ,\ \mu \geq 0,\ y\geq 0\right\} .
\end{equation}
The point-to-set mapping, $Q(\epsilon)$, generalizes local differentiability of $y^\star$ as a function of $\epsilon$~\cite{dontchev2009implicit}. For an $\epsilon$, if the optimal distribution $y^\star$ and corresponding optimal Lagrange multipliers are unique,  $Q(\epsilon)$ is a \emph{single valued set mapping}; in this  case we denote the optimal population distribution by $y^\star(\epsilon)$. Unless otherwise stated, Assumption~\ref{ass:increasing} holds from now on. 

Consider an MDP congestion game played with costs $\ell(y,0)$ and its optimal solution $y^\star(0)$. When $Q(\epsilon)$ is a single valued set mapping for $\epsilon$ in an open set containing zero, the Jacobian $\nabla_\epsilon y^\star(0)$ exists. We call  $\nabla_\epsilon y^\star(0)$ the \emph{sensitivity} of MDP Wardrop equilibria---i.e.,how $y^\star(0)$ changes when cost $\ell$ is perturbed by $\epsilon$.  

We restrict our attention to MDP congestion games whose unique equilibrium satisfies $y^\star(0) >0 $. 
\begin{assumption}[Positivity Condition]\label{ass:positivity}
The optimal population distribution of the unperturbed MDP congestion game satisfies $y^\star > 0$. 
\end{assumption}
Assumption~\ref{ass:positivity} is not restrictive in the following sense: when state-action costs satisfy Assumption~\ref{ass:increasing}, Assumption~\ref{ass:positivity} will always be satisfied for some total mass $M> 0$. Consider cost functions that satisfy $\ell_{sa}(0) = b_{sa} \in \reals$. If a hyperarc is not optimal, i.e. has no mass, then $b_{sa}$ must be at least $\max_{a'\in [A]} \ell_{sa'}(y^\star_{sa'},0)$. However, all other state action costs must increase as total mass $M$ increases, therefore a total mass threshold exists for which $\max_{a'\in [A]} \ell_{sa'}(y^\star_{sa'},0) \geq b_{sa}$, past which $(s,a)$ will become optimal.
\begin{proposition}[Perturbation Map]
\label{pro:P2Smapping}
If an MDP congestion game~\eqref{eqn:mdpgame} satisfies Assumptions~\ref{ass:increasing} and~\ref{ass:incidenceRank} with costs $\ell(\cdot,\epsilon)$, and $y^\star(\epsilon)$ satisfies Assumption~\ref{ass:positivity}, then the mapping $Q(\epsilon)$~\eqref{eqn:point2SetMapping} is a single valued mapping at $\epsilon$.
\end{proposition}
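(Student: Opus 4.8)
The plan is to show that the KKT system~\eqref{eqn:kkt_mdpcg} evaluated at the given $\epsilon$ has exactly one solution tuple $(y^\star,\nu^\star,\lambda^\star,\mu^\star)$, so that $Q(\epsilon)$ is a singleton. I would first dispatch $y^\star$. By Lemma~\ref{lem:fullRankIncidence}, the game~\eqref{eqn:mdpgame} is equivalent to~\eqref{eqn:fullRankMdpGame}, whose feasible set $\{y:\Tilde{E}y=0,\ \ones^Ty=M,\ y\geq 0\}$ is a nonempty compact polytope; under Assumption~\ref{ass:increasing} the objective is strictly convex, since its Hessian is $\nabla_y\ell(\cdot,\epsilon)\succ 0$. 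Hence a unique minimizer $y^\star$ exists, and because all constraints are affine, the KKT conditions~\eqref{eqn:kkt_mdpcg} are both necessary and sufficient for optimality; in particular $Q(\epsilon)$ is nonempty and every element of it shares this same $y^\star$.

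Next I would invoke the positivity hypothesis. Since $y^\star>0$ by Assumption~\ref{ass:positivity}, complementary slackness $(\mu^\star)^Ty^\star=0$ together with $\mu^\star\geq 0$ forces $\mu^\star=0$, which is therefore unique. The stationarity block of~\eqref{eqn:kkt_mdpcg} then reduces to the linear system $\begin{bmatrix}\Tilde{E}^T & \ones\end{bmatrix}\begin{bmatrix}\nu^\star\\ \lambda^\star\end{bmatrix}=\ell(y^\star,\epsilon)$, whose right-hand side is now a fixed vector (as $y^\star$ is fixed). All that remains is to show that this system has a unique solution, i.e.\ that $\begin{bmatrix}\Tilde{E}^T & \ones\end{bmatrix}\in\reals^{SA\times S}$ has full column rank $S$.

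This rank claim is the one nontrivial step. By Lemma~\ref{lem:fullRankIncidence}, $\Tilde{E}$ has full row rank $S-1$, so the $S-1$ columns of $\Tilde{E}^T$ are linearly independent; it then suffices to show $\ones\notin\mathrm{range}(\Tilde{E}^T)$, i.e.\ $\ones$ is not in the row space of $\Tilde{E}$. Suppose for contradiction that $\ones=\Tilde{E}^Tw$ for some $w\in\reals^{S-1}$. Feasibility of $y^\star$ gives $\Tilde{E}y^\star=0$, hence $\ones^Ty^\star=w^T\Tilde{E}y^\star=0$, contradicting $\ones^Ty^\star=M>0$. Therefore $\begin{bmatrix}\Tilde{E}^T & \ones\end{bmatrix}$ has full column rank, so $(\nu^\star,\lambda^\star)$ is uniquely determined. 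Combined with the uniqueness of $y^\star$ and of $\mu^\star$, this proves $Q(\epsilon)$ is single valued. I expect the bookkeeping for nonemptiness of $Q(\epsilon)$ and for strict convexity of the objective to be routine given the cited uniqueness result~\cite{calderone2017infinite}, so the main content is the full-rank argument above.
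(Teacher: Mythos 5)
Your proof is correct and follows essentially the same route as the paper's: uniqueness of $y^\star$ from strict convexity under Assumption~\ref{ass:increasing}, $\mu^\star=0$ from complementary slackness together with $y^\star>0$, and then uniqueness of the remaining multipliers from the stationarity equation. The only divergence is the final step: the paper extracts $\lambda^\star=(y^\star)^T\ell(y^\star)/M$ by left-multiplying the stationarity condition by $(y^\star)^T$ and then uses full column rank of $\Tilde{E}^T$ to pin down $\nu^\star$, whereas you prove directly that $\begin{bmatrix}\Tilde{E}^T & \ones\end{bmatrix}$ has full column rank via the observation that $\ones\notin\mathrm{range}(\Tilde{E}^T)$ (since $\Tilde{E}y^\star=0$ while $\ones^Ty^\star=M>0$) --- a clean argument that has the added benefit of justifying the full-row-rank claim for $N^T$ that the paper later invokes in Theorem~\ref{thm:sensitivity}.
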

\begin{proof}
From Assumptions~\ref{ass:increasing} and~\ref{ass:positivity}, there exists a unique $y^\star(\epsilon) > 0$ solving the KKT conditions~\eqref{eqn:kkt_mdpcg} for costs $\ell(\cdot, \epsilon)$. Lagrange multiplier $\mu^\star = 0$ from complementary slackness. The other optimal solutions can be determined by solving $(y^\star)^T(\ell(y^\star) - \Tilde{E}^T\nu^\star - \lambda^\star\ones) = 0$, which implies $\lambda^\star = (y^\star)^T\ell(y^\star)/M $. Furthermore, unique $y^\star$ and $\lambda^\star$ implies $\Tilde{E}\nu^\star$ is unique. Since $\Tilde{E}^T$ has full rank, $\nu^\star$ is unique.
\end{proof}
Proposition~\ref{pro:P2Smapping} implies that when $\ell$ is continuously differentiable at $y^\star$ and $\epsilon = 0$, there exists a continuously differentiable and invertible function of the optimal distribution $y$ in terms of $\epsilon$. We note that similar sensitivity results which do not consider stochastic congestion effects exist for routing games~\cite{Tobin1988}. However, our results for MDP congestion games are less restrictive due to the lack of the dual route/link space. 


\begin{theorem}
[MDP Congestion Game Flow Sensitivity]\label{thm:sensitivity}
Consider an MDP congestion game with costs $\ell(y, \epsilon)$, such that $\ell$ is a continuously differentiable function of $(y,\epsilon)$ and satisfies Assumption~\ref{ass:increasing}, and the associated hypergraph satisfies Assumption~\ref{ass:incidenceRank}.  
If the optimal population distribution $y^\star(\epsilon^\star) > 0$, the sensitivity of the MDP Wardrop equilibrium is given by
\[ \nabla_{\epsilon} y^\star = G^{-1}N(N^TG^{-1}N)^{-1}N^TG^{-1}J - G^{-1}J. \]
Moreover, the sensitivity of optimal state-action costs is
 \[\nabla_{\epsilon}\ell(y^\star,\epsilon^\star) = N(N^TG^{-1}N)^{-1}N^TG^{-1}J,\]
where $N = \begin{bmatrix}\Tilde{E}^T & \ones \end{bmatrix}$, $\Tilde{E}$ as given by Lemma~\ref{lem:fullRankIncidence}, $G = \nabla_y \ell(y^\star(\epsilon^\star), \epsilon^\star)$, and $J = \nabla_\epsilon \ell(y^\star(\epsilon^\star), \epsilon^\star)$.  
\end{theorem}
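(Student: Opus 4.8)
The plan is to apply the implicit function theorem to a reduced version of the KKT system~\eqref{eqn:kkt_mdpcg}. First I would invoke Proposition~\ref{pro:P2Smapping} together with Assumption~\ref{ass:positivity}: near $\epsilon^\star$ the map $Q$ is single valued, $y^\star(\epsilon)>0$ persists by continuity, and complementary slackness forces $\mu^\star(\epsilon)=0$. Hence, in a neighborhood of $\epsilon^\star$, the KKT conditions collapse to the square nonlinear system
\[
F(y,z,\epsilon) := \begin{bmatrix} \ell(y,\epsilon) - Nz \\ N^Ty - b \end{bmatrix} = 0,
\]
where $z := (\nu,\lambda)\in\reals^{S}$, $N := \begin{bmatrix}\Tilde{E}^T & \ones\end{bmatrix}$, and $b := (0_{S-1},M)$; the constraint $\mu^Ty=0$ is automatically maintained and drops out.

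Next I would compute the Jacobian of $F$ with respect to $(y,z)$, namely $K := \begin{bmatrix} G & -N \\ N^T & 0\end{bmatrix}$ with $G = \nabla_y\ell(y^\star(\epsilon^\star),\epsilon^\star)$, and show it is invertible. Since $G$ is positive definite by Assumption~\ref{ass:increasing}, $G^{-1}$ exists and the Schur complement of the $G$ block is $N^TG^{-1}N$, so it suffices to show $N$ has full column rank $S$. The matrix $\Tilde{E}^T$ has column rank $S-1$ by Lemma~\ref{lem:fullRankIncidence}, so the only thing to check is $\ones_{SA}\notin\mathrm{range}(\Tilde{E}^T)=(\ker\Tilde{E})^{\perp}$. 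By Assumption~\ref{ass:incidenceRank} the chain is strongly connected, hence admits a stationary distribution $\hat{x}>0$ with $E\hat{x}=0$, so $\Tilde{E}\hat{x}=0$ while $\ones_{SA}^T\hat{x}=\|\hat{x}\|_1>0$; this places $\ones_{SA}$ outside $(\ker\Tilde{E})^{\perp}$. Therefore $N$ has full column rank, $N^TG^{-1}N$ is positive definite, and the standard Schur-complement identity makes $K$ invertible.

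With invertibility in hand, the implicit function theorem~\cite{dontchev2009implicit} gives a continuously differentiable branch $(y^\star(\epsilon),z^\star(\epsilon))$ near $\epsilon^\star$ with
\[
\begin{bmatrix}\nabla_\epsilon y^\star \\ \nabla_\epsilon z^\star\end{bmatrix} = -K^{-1}\begin{bmatrix}J\\0\end{bmatrix},\qquad J = \nabla_\epsilon\ell(y^\star(\epsilon^\star),\epsilon^\star).
\]
The remaining step is to block-invert $K$ via the Schur complement $\mathcal{S}:=N^TG^{-1}N$, read off the top block, and obtain $\nabla_\epsilon y^\star = G^{-1}N\mathcal{S}^{-1}N^TG^{-1}J - G^{-1}J$, which is the claimed formula. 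The cost sensitivity then follows from the chain rule $\nabla_\epsilon\ell(y^\star,\epsilon^\star) = G\,\nabla_\epsilon y^\star + J$; substituting the expression for $\nabla_\epsilon y^\star$ cancels the $\pm J$ terms and leaves $N(N^TG^{-1}N)^{-1}N^TG^{-1}J$.

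I expect the main obstacle to be the invertibility of $K$, and within that the full-column-rank claim for $N$ — i.e., tying together Lemma~\ref{lem:fullRankIncidence} with the existence of a strictly positive stationary distribution under Assumption~\ref{ass:incidenceRank}; once that and $G\succ 0$ are in place, the rest is routine block-matrix algebra. A secondary point worth a sentence is that Assumption~\ref{ass:positivity} must hold on an open set around $\epsilon^\star$, which is guaranteed by continuity of $y^\star(\cdot)$.
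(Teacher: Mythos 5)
Your proposal is correct and follows essentially the same route as the paper: reduce the KKT system to a square smooth system using $y^\star>0$ and $\mu^\star=0$, apply the implicit function theorem, establish invertibility of the bordered Jacobian via the Schur complement $N^TG^{-1}N$, and block-invert to read off the formulas. The only (welcome) differences are that you justify the full column rank of $N$ explicitly via a stationary distribution in $\ker\Tilde{E}$ with nonzero total mass, a step the paper merely asserts, and you obtain the cost sensitivity by the chain rule $\nabla_\epsilon\ell = G\nabla_\epsilon y^\star + J$ rather than by differentiating $\ell^\star = N\begin{bmatrix}\nu^\star \\ \lambda^\star\end{bmatrix}$; both yield the stated result.
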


\begin{proof}
From Proposition~\ref{pro:P2Smapping}, the game with costs $\ell(\cdot, \epsilon)$ has associated single valued mapping $Q(\epsilon)$ in a neighborhood of $\epsilon^\star$, then $H(Q(\epsilon), \epsilon) = 0$ implies the total derivative $d H(Q(\epsilon), \epsilon)/d \epsilon = 0 $ for $\norm{\epsilon - \epsilon^\star} \leq \delta$. Let $w = \begin{pmatrix}y & \nu & \lambda\end{pmatrix}$ and $f(y, \nu, \lambda, \epsilon) = 
\begin{bmatrix}
\ell(y, \epsilon) - \Tilde{E}^T\nu -\lambda\ones &\\
\Tilde{E}y&\\
\ones^Ty - M&
\end{bmatrix}$.
Like $H$, $f$ is continuously differentiable in $w$, and is equal to $0$ at $y^\star(\epsilon)$ and corresponding optimal Lagrange multipliers. From the implicit  function theorem~\cite[Sec.1B]{dontchev2009implicit}, when $\nabla_{w}f(w, \epsilon^\star)$ is invertible, 
$\nabla_\epsilon w^\star =  \big(\nabla_{w}f(w^\star, \epsilon^\star)\big)^{-1} \nabla_\epsilon f(w^\star, \epsilon^\star)$.
We wish to show that
$ \nabla_{w} f(w^\star, \epsilon^\star) = \begin{pmatrix} G & -N \\ N^T & 0\end{pmatrix}$ is non-singular.
The Schur complement of $\nabla_v f(Q^\star(\epsilon), \epsilon^\star)$ with respect to the lower block diagonal component $0$ is $ N^TG^{-1}N$. From Assumptions~\ref{ass:incidenceRank} and~\ref{ass:increasing}, $N^T$ has full row rank and $G \succ 0$. Therefore $N^TG^{-1}N$ is positive definite and non-singular and equivalently, $\nabla_{w} f(w^\star, \epsilon^\star) \succ 0$ and non-singular. 

The partial gradient of $f(w^\star, \epsilon^\star)$ with respect to $\epsilon$ is
\[\nabla_\epsilon f(w^\star, \epsilon^\star)= \begin{pmatrix} J  & 0 \\ 0 & 0\end{pmatrix}.\]
We use Gaussian elimination to invert $ \nabla_{w} f(w^\star, \epsilon^\star)$ and get 
\[( \nabla_{Q(\epsilon)} f(Q^\star(\epsilon), \epsilon^\star) )^{-1} = \begin{pmatrix}
A & B \\ 
C & D \\
\end{pmatrix}.\]
where $A= G^{-1} - G^{-1}N(N^TG^{-1}N)^{-1}N^TG^{-1}$, $B=  - G^{-1} N(N^TG^{-1}N)^{-1}$, $C = B^T$, $D =  (N^TG^{-1}N)^{-1}$.

We decompose $w$ to its components and solve for $\nabla_\epsilon y^\star$,
\begin{align*}
   & \nabla_{\epsilon}\begin{bmatrix}y^\star\\\nu^\star\\\lambda^\star\end{bmatrix}= -
    \begin{pmatrix}
    G^{-1}(J - N(N^TG^{-1}N)^{-1}N^TG^{-1}J) & 0 \\
    -(N^TG^{-1}N)^{-1}N^TG^{-1}J & 0 \\
    \end{pmatrix},
\end{align*}
where the first row corresponds to $\nabla_{\epsilon}y^\star(\epsilon^\star)$ and the second row corresponds to $ \nabla_{\epsilon} \begin{bmatrix} \nu^\star & \lambda^\star\end{bmatrix}^T $. The first block corresponds to $\nabla_{\epsilon} y^\star(\epsilon^\star)$. Note that because $y^\star(\epsilon^\star) > 0$, we can express the optimal cost as 
\[\ell^\star = \begin{bmatrix}\Tilde{E}^T & \ones \end{bmatrix}\begin{bmatrix}\nu^\star\\\lambda^\star\end{bmatrix} = N\begin{bmatrix}\nu^\star\\\lambda^\star\end{bmatrix}. \]
The sensitivity of the costs $\ell^\star$ with respect to perturbation is 
\[\nabla_{\epsilon}\ell^\star = N \nabla_{\epsilon}\begin{bmatrix}\nu^\star\\\lambda^\star\end{bmatrix} = N(N^TG^{-1}N)^{-1}N^TG^{-1}J.\]
\end{proof}

\subsection{Stochastic Braess Paradox}
In the routing game literature, a well-known phenomenon that is related to the sensitivity of optimal routes is \emph{Braess paradox}~\cite{braess1968paradoxon}. The phenomenon refers to the paradoxical effect that occurs when costs of traversing edges are \emph{decreased}, resulting in an increase in player's average cost. We show that a similar behaviour exists in MDP congestion games, and its occurrence can be linked to the underlying hypergraph structure through sensitivity analysis. Consider the \emph{social cost} of an MDP congestion game,
$J(y, \ell) = y^T\ell(y)$. 

Stochastic Braess paradox can be defined by the sensitivity of the social cost of MDP congestion games.

\begin{definition}[Stochastic Braess Paradox]
For two MDP congestion games~\eqref{eqn:mdpgame} satisfying Assumption~\ref{ass:increasing} defined on the same hypergraph, their respective costs $\ell$ and $\bar{\ell}$ satisfies
\[\ell(y) - \bar{\ell}(y) \geq 0,\quad \forall \{y \ |\ Ey = 0,\ \ones^Ty = M,\ y \geq 0\}.\] 
Let the optimal population distribution be $y^\star$ and $\bar{y}^\star$, respectively. A \emph{stochastic Braess paradox} occurs when the social cost
satisfies $J(y^\star, \ell) < J(\bar{y}^\star, \bar{\ell})$.
\end{definition}

When $\ell$ and $\bar{\ell}$ are instantiated by different $\epsilon$ values of the same continuously differentiable function $\ell(\cdot, \epsilon)$, the existence of Braess paradox suggests that there is a perturbation which increases the state-action costs from $\bar{\ell}$ to $\ell$ such that $ J(y^\star, \ell)< J(\bar{y}^\star, \bar{\ell}) $.   
\begin{corollary}[Sufficient Conditions for stochastic BP]
Consider a feasible MDP congestion game~\eqref{eqn:mdpgame} which satisfies Assumptions~\ref{ass:increasing} and  \ref{ass:incidenceRank} with an optimal population distribution $y^\star > 0$. Its \emph{social cost sensitivity} can be defined as
\begin{equation}
\begin{aligned}
   \nabla_{\epsilon} J = &( G^{-1}N(N^TG^{-1}N)^{-1})^{-1}N^TG^{-1} - G^{-1})\ell(y^\star)\\
    &  +  N(N^TG^{-1}N)^{-1}N^TG^{-1}y^\star.
    \label{eqn:localSensitivity}
\end{aligned}    
\end{equation}
Then, $\nabla_{\epsilon} J  \notin \reals_+^{|\mc{S}| |\mc{A}|}$ is a sufficient condition for the occurrence of \emph{stochastic Braess paradox}.
\end{corollary}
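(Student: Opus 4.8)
The plan is to combine the chain rule with the flow-sensitivity formula of Theorem~\ref{thm:sensitivity} and then invoke the sign characterization implicit in the definition of stochastic Braess paradox. First I would write the social cost as $J(\epsilon) = y^\star(\epsilon)^T \ell(y^\star(\epsilon), \epsilon)$ and differentiate with respect to $\epsilon$ at $\epsilon^\star$. Since $y^\star(\epsilon) > 0$ near $\epsilon^\star$ (Assumption~\ref{ass:positivity} plus Proposition~\ref{pro:P2Smapping}), the map is continuously differentiable there, and the product rule gives
\[
\nabla_\epsilon J = (\nabla_\epsilon y^\star)^T \ell(y^\star,\epsilon^\star) + (\nabla_\epsilon \ell(y^\star,\epsilon^\star))^T y^\star,
\]
where $\nabla_\epsilon \ell(y^\star,\epsilon^\star)$ is the total derivative $G\nabla_\epsilon y^\star + J$. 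Substituting the expressions $\nabla_\epsilon y^\star = G^{-1}N(N^TG^{-1}N)^{-1}N^TG^{-1}J - G^{-1}J$ and $\nabla_\epsilon \ell^\star = N(N^TG^{-1}N)^{-1}N^TG^{-1}J$ from Theorem~\ref{thm:sensitivity} then yields exactly the claimed formula~\eqref{eqn:localSensitivity} after collecting the two terms; this is the routine computational part.

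Second, I would establish the sufficiency claim. By the fundamental theorem of calculus, if $\ell(\cdot,\epsilon)$ is parameterized so that a perturbation $\delta\epsilon$ produces $\bar\ell = \ell(\cdot,\epsilon^\star)$ and $\ell = \ell(\cdot,\epsilon^\star+\delta\epsilon)$ with $\ell(y) - \bar\ell(y)\geq 0$ on the feasible set, then a first-order (infinitesimal) version of the paradox — $J$ decreasing in a direction along which costs increase — is detected precisely when $\nabla_\epsilon J$ is not in the nonnegative orthant: if $\nabla_\epsilon J \in \reals_+^{SA}$, then every cost-increasing perturbation (those with nonnegative components, consistent with $\ell - \bar\ell \geq 0$) yields $\langle \nabla_\epsilon J, \delta\epsilon\rangle \geq 0$, i.e. no paradox; conversely if some component of $\nabla_\epsilon J$ is negative, one can choose $\delta\epsilon$ supported on that coordinate so that social cost strictly decreases while the state-action cost on that hyperarc strictly increases, giving a stochastic Braess paradox. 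Hence $\nabla_\epsilon J \notin \reals_+^{SA}$ is sufficient.

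The main obstacle I anticipate is the precise bookkeeping connecting the \emph{local} (differential) statement to the \emph{global} definition of stochastic Braess paradox, which compares two games at finite cost separation rather than an infinitesimal one. A clean argument needs to make explicit that we are reading off the first-order behavior of $J$ along admissible perturbation directions (those keeping the hypergraph fixed and only raising costs), and that for small enough $\delta\epsilon$ the higher-order remainder does not overturn the sign; this uses continuous differentiability of $\ell$ and the openness of the single-valued region of $Q$ around $\epsilon^\star$. A secondary point to verify carefully is that the formula~\eqref{eqn:localSensitivity} as typeset is consistent — in particular that the grouping $(G^{-1}N(N^TG^{-1}N)^{-1})^{-1}N^TG^{-1}$ is meant to be $G^{-1}N(N^TG^{-1}N)^{-1}N^TG^{-1}$ acting on $\ell(y^\star)$, matching the $A$-block times $\ell^\star$ plus the cost-sensitivity block times $y^\star$ from the proof of Theorem~\ref{thm:sensitivity}; I would restate it in the unambiguous block form $\nabla_\epsilon J = (\nabla_\epsilon y^\star)^T\ell(y^\star) + (\nabla_\epsilon\ell^\star)^T y^\star$ before expanding.
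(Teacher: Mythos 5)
Your proposal matches the paper's proof in both structure and substance: the paper likewise obtains \eqref{eqn:localSensitivity} by the chain rule $\nabla_\epsilon J = \nabla_{y^\star}J\,\nabla_\epsilon y^\star + \nabla_{\ell^\star}J\,\nabla_\epsilon \ell^\star$ with the two blocks from Theorem~\ref{thm:sensitivity}, and then picks a nonnegative perturbation direction with $\epsilon^T\nabla_\epsilon J<0$, controlling the finite step via the mean value theorem exactly as your remainder argument does. Your observation that the typeset grouping in \eqref{eqn:localSensitivity} is inconsistent and should read $G^{-1}N(N^TG^{-1}N)^{-1}N^TG^{-1}-G^{-1}$ acting on $\ell(y^\star)$ is also correct.
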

\begin{proof}
$J$ is bilinear and therefore continuously differentiable in $\ell^\star$ and $y^\star$. From Theorem~\ref{thm:sensitivity}, there exists a neighbourhood $\norm{\epsilon} \leq \delta$ within which $J$ is continuously differentiable in $\epsilon$, and the Jacobian is given as
\[\nabla_\epsilon J(\ell^\star, y^\star) = \nabla_{y^\star}J\nabla_{\epsilon}y^\star + \nabla_{\ell^\star}J\nabla_{\epsilon}\ell^\star.\]
For any $\nabla_\epsilon J \notin \reals^{|\mc{S}| |\mc{A}|}_+$, there exists $\epsilon \in \reals^{|\mc{S}| |\mc{A}|}_+$ such that $\norm{\epsilon} \leq \delta$ and $\epsilon^T \nabla_\epsilon J < 0$.
We then consider the MDP congestion game with costs $\bar{\ell}$ and equilibrium $\bar{y}^\star$, where $\bar{\ell}$ is defined by 
\[\bar{\ell} = \ell + \epsilon.\]
By the mean value theorem, there exists $k \in (0,1]$ where 
\[J(\bar{y}^\star, \bar{\ell}^\star) = J(y^\star, \ell^\star) +(k\epsilon)^T\nabla_{\epsilon}J. \]
Since $k\epsilon^T\nabla_{\epsilon}J(\delta) < 0$, $J(\bar{y}^\star, \bar{\ell}) < J(y^\star, \ell^\star)$ holds.
\end{proof} 

\section{Role of Stochasticity}
\label{sec:stochasticity}
In this section, we consider the deterministic counterpart of MDP congestion games to evaluate how the introduction of \emph{stochasticity} influences social cost sensitivity.
\subsection{Cycle Game}\label{sec:cycleGame}
A directed \emph{primal graph}~\cite{adler2007hypertree} $\mc{G}_d = ([S], \E_d)$ can be derived from a hypergraph $\mc{G}= ([S], \E)$, by considering the same set of states and define \emph{edge set} $\E_d$ defined by 
\[e = (s_1, s_2) \in \E_d \text{ if } \exists \ (s_1, a) \text{ s.t. } P_{s_2s_1a} > 0. \]
Its incidence matrix $D \in \reals^{S\times\mc{E}_d}$ is given by
\[
[D]_{ie} = \begin{cases}
\ \ 1, & \text{ if edge } $e$ \text{ starts at state }$i$, \\
-1, & \text{ if edge } $e$ \text{ ends at state }$i$, \\
\ \ 0, & \text{ otherwise. } \\
\end{cases}
\]
An MDP congestion game~\eqref{eqn:mdpgame} can be played on $\mc{G}_d$ for a given cost $\ell$. The constraint $Dy = 0$ implies that any feasible population distribution must be a combination of cycles of $\mc{G}_d$~\cite{godsil2001cuts}. Therefore, we call a deterministic MDP congestion game where all state-action pairs lead to deterministic outcomes, a \emph{cycle game}~\cite{calderone2017infinite}. 

The edge set of a primal graph dictates \emph{allowable} transitions over state space $[S]$, where as a hyperarc corresponds to a discrete set of particular probability distributions assignments to these allowable transitions as given by $\mc{E}_d$.
We consider a transformation $T \in \reals_+^{|\mc{E}_d|\times |\E|}$ between the incidence matrix of a hypergraph $E$ and its host graph $D$, such that $E = DT$. Columns of $T$ denote how an action $a$ distributes mass over edges adjacent to $s$ of the primal graph, 
\begin{equation}
   T_{(s_1,s_2),(s,a)} = \begin{cases}
   P_{s_2as}, & s_1 = s, \\
   0, & \text{otherwise.}
   \end{cases}
\end{equation}
In addition to being element-wise non-negative, $T$ is also column stochastic---i.e.,
\[\underset{e\in\mc{E}_d}{\sum}T_{e,(s,a)} = \underset{s' \in \mc{S}}{\sum}P_{s'as} = 1.\]
An example is given in Fig.~\ref{fig:cycleGame} in which labeled edges are defined between states $\{A, B, C\}$. The incidence and transformation matrices corresponding to Fig.~\ref{fig:cycleGame} is given by
\[D = \begin{bmatrix}
0 & -1 & 0& 1 \\
1 & 1 & -1 & 0 \\
-1 & 0 & 1 & -1
\end{bmatrix}, T = \begin{bmatrix}
0.4 & 0 & 0 & 0\\
0.6 & 1 & 0 & 0\\
  0 & 0 & 1 & 0\\
  0 & 0 & 0 & 1
\end{bmatrix}.\]
\begin{figure}[h]
\centering
\includegraphics[width=0.3\columnwidth]{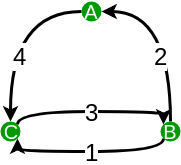}
\caption{Example graph structure of a cycle game.}
\label{fig:cycleGame}
\end{figure}
The eigenvalues of $T$ characterize the {amount} of stochasticity introduced by the MDP dynamics. When $T = I$, the MDP congestion game is itself a cycle game with no stochasticity. When each state-action pair uniformly distributes the probability over available edges, $T$ has a block diagonal structure with eigenvalues less than 1 if a state has two or more actions available. 
Fig.~\ref{fig:cycleGame} also provides an example of a feasible transformation $T$ that is invertible.
\subsection{Effects of Stochasticity}
When the incidence matrix of a hypergraph is related to the incidence matrix of the corresponding primal graph by an invertible transformation $T$, there is a direct relationship between the equilibria of the MDP congestion game and cycle game played on these graphs. 
\begin{assumption}[Invertible Transformation $T$]
\label{ass:invertibleCycleGame}
A directed hypergraph $\mc{G} = ([S], \E)$ can be induced from its directed primal graph $\mc{G}_d = ([S], \mc{E}_d)$, such that $|\mc{E}| = |\mc{E}_d| $, and the incidence matrices, $E$ and $D$, of the two graphs, respectively, are related by an invertible transformation $T$.
\[E = DT, \qquad T \in \reals_+^{|\mc{E}_d| \times |\mc{E}| },\  \ones^TT = \ones^T.\]
\end{assumption}

\begin{proposition}[Equilibria Relationship]
If the graph $\mc{G}$ of an MDP congestion game satisfies Assumption~\ref{ass:invertibleCycleGame}, $y^\star > 0 $ is an MDP Wardrop equilibrium if and only if $Ty^\star$ is an equilibrium of the cycle game defined on $\mc{G}_d$ with costs $\ell_e$ on its edges where
\[\ell_e(\cdot) = T^{-T} \ell_{sa}\circ T^{-1}(\cdot).\]
\end{proposition}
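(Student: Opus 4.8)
The plan is to translate the MDP congestion game optimality conditions under the change of variables $z = Ty$ into precisely the optimality conditions of the cycle game on $\mc{G}_d$, using that $T$ is invertible (Assumption~\ref{ass:invertibleCycleGame}) and that $y^\star > 0$ (so that the positivity constraints are inactive and the KKT system reduces to the equality-constrained system $f = 0$ from the proof of Theorem~\ref{thm:sensitivity}). First I would record the two feasibility facts: since $E = DT$, the conservation law $Ey = 0$ is equivalent to $D(Ty) = 0$; and since $\ones^T T = \ones^T$, the mass constraint $\ones^T y = M$ is equivalent to $\ones^T (Ty) = M$. Hence $y$ is feasible for the hypergraph game iff $z = Ty$ is feasible for the cycle game, and $y > 0$ together with $T \geq 0$ column-stochastic gives $z = Ty \geq 0$ (indeed the relevant inequality constraints are slack on both sides).

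Next I would handle the stationarity condition. From the proof of Proposition~\ref{pro:P2Smapping} / Theorem~\ref{thm:sensitivity}, for an interior equilibrium the multiplier $\mu^\star = 0$, so $y^\star$ is an MDP Wardrop equilibrium iff there exist $\nu, \lambda$ with $\ell_{sa}(y^\star) = \tilde E^T \nu + \lambda \ones$; equivalently, writing $N = [\,\tilde E^T\ \ones\,]$, the cost vector $\ell_{sa}(y^\star)$ lies in $\mathrm{range}(E^T)$ augmented by $\ones$ — but more cleanly, I would work with the full $E$: $\ell_{sa}(y^\star) = E^T \nu' + \lambda \ones$ for some $\nu' \in \reals^S$ (the reduction from $E$ to $\tilde E$ being exactly Lemma~\ref{lem:fullRankIncidence}). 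Now I substitute. By definition $\ell_e(z) = T^{-T}\ell_{sa}(T^{-1} z)$, so with $z^\star = T y^\star$ we have $\ell_e(z^\star) = T^{-T}\ell_{sa}(y^\star)$. Left-multiplying the stationarity identity by $T^{-T}$ gives
\[
\ell_e(z^\star) = T^{-T}\ell_{sa}(y^\star) = T^{-T} E^T \nu' + \lambda\, T^{-T}\ones = (E T^{-1})^T \nu' + \lambda\, T^{-T}\ones = D^T \nu' + \lambda \ones,
\]
where I used $E = DT \Rightarrow E T^{-1} = D$ and $\ones^T T = \ones^T \Rightarrow T^{-T}\ones = \ones$. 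This is exactly the stationarity (Wardrop) condition for the cycle game on $\mc{G}_d$ with edge costs $\ell_e$ and interior flow $z^\star > 0$. The argument is fully reversible: given an interior cycle-game equilibrium $z^\star$, set $y^\star = T^{-1}z^\star$; feasibility transfers back by the same equivalences, and multiplying the cycle-game stationarity identity by $T^T$ recovers the hypergraph stationarity identity. One should note that $\ell_e$ inherits Assumption~\ref{ass:increasing}: $\nabla_z \ell_e(z) = T^{-T}\nabla_y \ell(T^{-1}z) T^{-1}$, which is positive definite whenever $\nabla_y\ell$ is, so the cycle game is a bona fide instance of~\eqref{eqn:mdpgame} and its interior equilibrium is characterized by the same KKT reduction.

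The main obstacle — really the only subtle point — is justifying that the positivity/complementarity parts of the KKT system genuinely drop out on both sides, i.e. that "MDP Wardrop equilibrium with $y^\star>0$" is equivalent to solving the reduced equality-constrained system $f(y,\nu,\lambda)=0$, and symmetrically that an interior cycle-game equilibrium solves its reduced system. This is essentially the content already established in Proposition~\ref{pro:P2Smapping} (interiority forces $\mu^\star = 0$), so I would invoke that rather than re-derive it; the remaining work is the one-line linear-algebra substitution above. A secondary point worth stating explicitly is that $z^\star = Ty^\star > 0$ strictly, which holds because $T$ is nonnegative with no zero columns (column-stochastic) and $y^\star > 0$, so every entry of $Ty^\star$ is a positive combination — this ensures the cycle-game equilibrium is itself interior, keeping the equivalence symmetric.
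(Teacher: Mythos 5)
Your proposal is correct and follows essentially the same route as the paper: a change of variables $z = Ty$ in the reduced (interior) KKT system, using $E = DT$ for conservation, $\ones^T T = \ones^T$ to get $T^{-T}\ones = \ones$ for the mass and stationarity conditions, and positive definiteness of $T^{-T}(\nabla_y\ell)T^{-1}$ to identify the result as the KKT system of a bona fide cycle game. One cosmetic note: strict positivity of $z^\star = Ty^\star$ follows from $T$ having no zero \emph{rows} (guaranteed by invertibility), not from it having no zero columns, but this does not affect the argument.
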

\begin{proof}
Consider an MDP Wardrop equilibrium $y^\star$ that satisfies Assumption~\ref{ass:positivity}, then there exists primal variable solution $y^\star$ and dual variables $\nu^\star$, $\lambda^\star$ that satisfy the KKT conditions~\eqref{eqn:kkt_mdpcg} with $\mu^\star = 0$. We can re-write $H(y,\nu,\lambda,\mu) = 0$ from~\eqref{eqn:kkt_mdpcg} with transformations $DT = E$ and $z^\star = Ty^\star$, and $\mu^\star = 0$,
\begin{equation}\label{eqn:cycle_kkt}
\begin{aligned}
T^{-T}\ell(T^{-1} z^\star) - D^T\nu^\star - \lambda^\star T^{-T}\ones & = 0,\\
Dz^\star & = 0, \\
\ones^TT^{-1}z^\star - M & = 0.
\end{aligned}
\end{equation}
Since $T$ is element-wise non-negative, and $y^\star > 0$, $Ty^\star = z^\star > 0$. By construction, $T^{-1}$ is column stochastic, therefore $T^{-T}\ones = \ones$. Therefore~\eqref{eqn:cycle_kkt} is equivalent to the KKT conditions of a game with cost $T^{-T}\circ \ell \circ T^{-1}$, deterministic incidence matrix $D$, and optimal population distribution $z^\star$. 

We note that $T^{-T}(\nabla \ell) T^{-1}$ is positive definite, and while an individual state-action cost $(T^{-T}\circ \ell \circ T^{-1})_{sa}$ requires \emph{multiple} hyperarcs' population distribution to define the congestion cost at $(s,a)$, it defines a potential game~\cite{calderone2017markov} consistent with Assumption~\ref{ass:increasing}. This implies that~\eqref{eqn:cycle_kkt} coincides with the KKT conditions of a cycle game formulation with costs $T^{-T}\circ \ell \circ T^{-1}$, incidence matrix $D$, and mass $M$. Since $z^\star > 0$ satisfies the KKT conditions of this cycle game, $z^\star$ is the cycle game's unique optimal population distribution.
\end{proof}
The relationship between the equilibria of the deterministic game and the equilibria of the game allows for a direct comparison between the sensitivity of the social cost in the two games. We show next that the social cost sensitivity of a MDP congestion game can be directly bounded by the eigenvalues of $T$, ie the amount of stochasticity introduced.  
\begin{theorem}[Effects of Stochasticity]
\label{thm:effect_stochasticity}
We consider an MDP congestion game~\eqref{eqn:mdpgame} and a cycle game (Section ~\ref{sec:cycleGame}) whose graphs satisfy Assumption~\ref{ass:incidenceRank}. Let the social cost of the cycle game be $J_c$, and the social cost of the MDP congestion game be $J$, the sensitivity of the cycle game can be bounded by
\[\norm{\nabla_\epsilon J_c}_2 \leq \norm{T}_2 \norm{\nabla_\epsilon J}_2. \]
\end{theorem}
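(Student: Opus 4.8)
The plan is to apply Theorem~\ref{thm:sensitivity} to \emph{both} the MDP congestion game and its induced cycle game, and then to transport each matrix appearing in the cycle game's sensitivity expression back through $T$ to the corresponding matrix of the MDP congestion game. By the Equilibria Relationship proposition (together with its remark that $T^{-T}\circ\ell\circ T^{-1}$ defines a potential game consistent with Assumption~\ref{ass:increasing}), the cycle game on $\mc{G}_d$ is itself an MDP congestion game: its full-rank incidence block is $\Tilde{D}$, obtained by deleting from $D$ the same row deleted from $E$ to build $\Tilde{E}$ (so that $\Tilde{E} = \Tilde{D}T$, and $\Tilde{D}$ stays full row rank since the only row-dependence is $\ones^T D = 0$, using Assumption~\ref{ass:incidenceRank}); its cost is $\ell_e = T^{-T}\ell\circ T^{-1}$; its mass is $M$; and its equilibrium is $z^\star = Ty^\star > 0$ with optimal cost $\ell_e^\star = T^{-T}\ell^\star$. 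Since $G_c := \nabla_z\ell_e(z^\star) = T^{-T}GT^{-1}$ is congruent to $G \succ 0$, it is again positive definite, so Theorem~\ref{thm:sensitivity} applies verbatim to the cycle game, and Assumption~\ref{ass:invertibleCycleGame} ($|\mc{E}| = |\mc{E}_d|$, $T$ invertible) guarantees $\nabla_\epsilon J_c$ and $\nabla_\epsilon J$ have the same dimension so that the comparison is meaningful.

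The two algebraic facts that make the argument collapse are: (i) $N := \begin{bmatrix}\Tilde{E}^T & \ones\end{bmatrix}$ and $N_c := \begin{bmatrix}\Tilde{D}^T & \ones\end{bmatrix}$ satisfy $N = T^T N_c$, which uses $\Tilde{E}^T = T^T\Tilde{D}^T$ together with $T^T\ones = \ones$, the latter because $T$ is column stochastic ($\ones^T T = \ones^T$); and (ii) $G = T^T G_c T$. Combining (i) and (ii) yields the invariance $N^T G^{-1} N = N_c^T G_c^{-1} N_c$, and then a direct substitution shows that the cost-sensitivity operator $\Psi := N(N^T G^{-1} N)^{-1} N^T G^{-1}$ and the flow-sensitivity operator $\Pi := G^{-1}\Psi - G^{-1}$ from Theorem~\ref{thm:sensitivity} transform as $\Psi_c = T^{-T}\Psi T^T$ and $\Pi_c = T\Pi T^T$.

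Next I would assemble the social-cost sensitivities via the chain rule behind the social-cost-sensitivity corollary. Differentiating $J = y^T\ell$ along the equilibrium path gives $\nabla_\epsilon J = (\nabla_\epsilon y^\star)^T\ell^\star + (\nabla_\epsilon\ell^\star)^T y^\star = \Pi^T\ell^\star + \Psi^T y^\star$ (using $\nabla_\epsilon y^\star = \Pi$ and $\nabla_\epsilon\ell^\star = \Psi$ from Theorem~\ref{thm:sensitivity} for an additive cost perturbation), and likewise $\nabla_\epsilon J_c = \Pi_c^T\ell_e^\star + \Psi_c^T z^\star$. Substituting $\Pi_c^T = T\Pi^T T^T$, $\Psi_c^T = T\Psi^T T^{-1}$, $\ell_e^\star = T^{-T}\ell^\star$ and $z^\star = Ty^\star$, the inner factors $T^T T^{-T}$ and $T^{-1}T$ cancel, leaving $\nabla_\epsilon J_c = T\bigl(\Pi^T\ell^\star + \Psi^T y^\star\bigr) = T\,\nabla_\epsilon J$. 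The stated inequality then follows from submultiplicativity of the spectral norm, $\norm{\nabla_\epsilon J_c}_2 = \norm{T\,\nabla_\epsilon J}_2 \leq \norm{T}_2\norm{\nabla_\epsilon J}_2$. (If instead the cycle-game perturbation is taken to be the pushforward of the MDP-game perturbation, the social-cost \emph{value} is preserved and one obtains $\nabla_\epsilon J_c = \nabla_\epsilon J$ directly; the bound still holds because $\norm{T}_2 \geq 1$, as $\ones$ is an eigenvector of $T^T$ with eigenvalue $1$.)

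I expect the main obstacle to be the matrix bookkeeping in the second step: $T$ is not symmetric, so one must track carefully which of $T, T^T, T^{-1}, T^{-T}$ lands on which side when commuting past the $(N^T G^{-1} N)^{-1}$ sandwich, and the pivotal identity $N = T^T N_c$ rests entirely on $\ones$ being a \emph{left} eigenvector of $T$ together with $\Tilde{E} = \Tilde{D}T$. I would therefore isolate and state those two facts up front and verify $\Psi_c = T^{-T}\Psi T^T$ and $\Pi_c = T\Pi T^T$ as standalone identities before combining them with $\ell_e^\star = T^{-T}\ell^\star$ and $z^\star = Ty^\star$.
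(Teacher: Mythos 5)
Your proposal is correct and follows the same route as the paper's proof: both hinge on the identity $N = T^T N_c$ (from $\Tilde{E} = \Tilde{D}T$ together with $T^T\ones = \ones$) and on conjugating the sensitivity blocks of Theorem~\ref{thm:sensitivity} through $T$. The difference is in the endgame, and yours is the tighter one. The paper stops at the block form with entries $T^{-T}AT^TT$ and $TB$ and asserts $\sigma_{\max}\{T^{-T}AT^TT,\ TB\} \leq \norm{T}_2\norm{M}_2$, which is not justified as written: a priori $\norm{T^{-T}AT^TT}_2$ is only bounded by $\norm{T^{-1}}_2\norm{T}_2^2\norm{A}_2$, not by $\norm{T}_2\norm{A}_2$. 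Your careful transpose bookkeeping ($\Psi_c^T z^\star = T\Psi^T y^\star$ and $\Pi_c^T\ell_e^\star = T\Pi^T\ell^\star$, after establishing the invariance $N_c^TG_c^{-1}N_c = N^TG^{-1}N$) collapses everything to the exact vector identity $\nabla_\epsilon J_c = T\,\nabla_\epsilon J$, from which the claimed inequality is immediate by submultiplicativity; this repairs the weakest step of the paper's argument. Your closing observation that $\norm{T}_2 \geq 1$ (because $T^T\ones = \ones$) is also correct and contradicts the paper's remark after the theorem that $\norm{T}_2 \leq 1$; that affects the interpretation the paper draws from the bound, though not the bound itself.
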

\begin{proof}
Let $N_c = \begin{bmatrix}
\bar{D}^T & \ones
\end{bmatrix}$, where $\bar{D}$ is $D$ with any one row removed. From Assumption~\ref{ass:incidenceRank}, the removed row cannot be identically zero as that would ensure $\rank(D) \leq S - 2$, then $N_c$ is related to $N = \begin{bmatrix}\bar{E}^T & \ones\end{bmatrix}$ by $T^TN_c = N$ where $\bar{E}$ has the same row removed.

Since $z^\star = Ty^\star$, the sensitivity of the cycle game social cost $J_c =  (z^\star)^T T^{-T}\ell(T^{-1}z^\star)$ can be evaluated at $(y^\star, \ell^\star)$,
\[\nabla_\epsilon J_c\begin{pmatrix}y^\star \\ \ell(y^\star)\end{pmatrix} = \begin{pmatrix}
 T^{-T}A T^TT & 0 \\
 0 & TB
\end{pmatrix}\begin{pmatrix}y^\star \\ \ell(y^\star)\end{pmatrix}.\]
where $A = N(N^TG^{-1}N)^{-1}N^TG^{-1}$ and $B = G^{-1} - G^{-1}N(N^TG^{-1}N)^{-1})^{-1}N^TG^{-1}$. 
In comparison, the sensitivity of the MDP congestion game's social cost is
\[\nabla_\epsilon J\begin{pmatrix}y^\star \\ \ell(y^\star)\end{pmatrix} = \begin{pmatrix}
 A& 0 \\
 0 & B
\end{pmatrix}\begin{pmatrix}y^\star \\ \ell(y^\star)\end{pmatrix}.\]
We can compare the social cost sensitivity Jacobian for the cycle game and the MDP congestion game, denoted by $M_c$ and $M$ respectively. 
\begin{equation}
\begin{aligned}
  \norm{M_c}_2 & = \sigma_{max}\{T^{-T}AT^TT, TB\} \\
  & \leq \norm{T}_2 \norm{M}_2.
\end{aligned}
\end{equation}
\end{proof}
Theorem~\ref{thm:effect_stochasticity} states that given equivalent Wardrop equilibria, the sensitivity of the social cost in the deterministic cycle game is always bounded by the sensitivity of the MDP congestion game and the amount of stochasticity introduced. Since $\norm{T}_2 \leq 1$, Theorem~\ref{thm:effect_stochasticity} states that introducing stochasticity \emph{increases} effects of Braess paradox.
\section{Simulations}\label{sec:wheatstoneExamples}
In this section, we use the results of sensitivity analysis on a hypergraph derived from a directed Wheatstone graph. Wheatstone structure is known to induce Braess paradox for non-atomic routing games~\cite{milchtaich2006network}, we analyze its behaviour under stochastic transitions and show that not only does stochastic Braess paradox also occur, but we can avoid the paradox by our sensitivity analysis. We demonstrate Theorem~\ref{thm:sensitivity} by cost perturbations in both the negative and positive directions of the social cost sensitivity, and validating the predictions with simulated results. 
\begin{figure}[h]
    \centering
    \includegraphics[width=0.45\columnwidth]{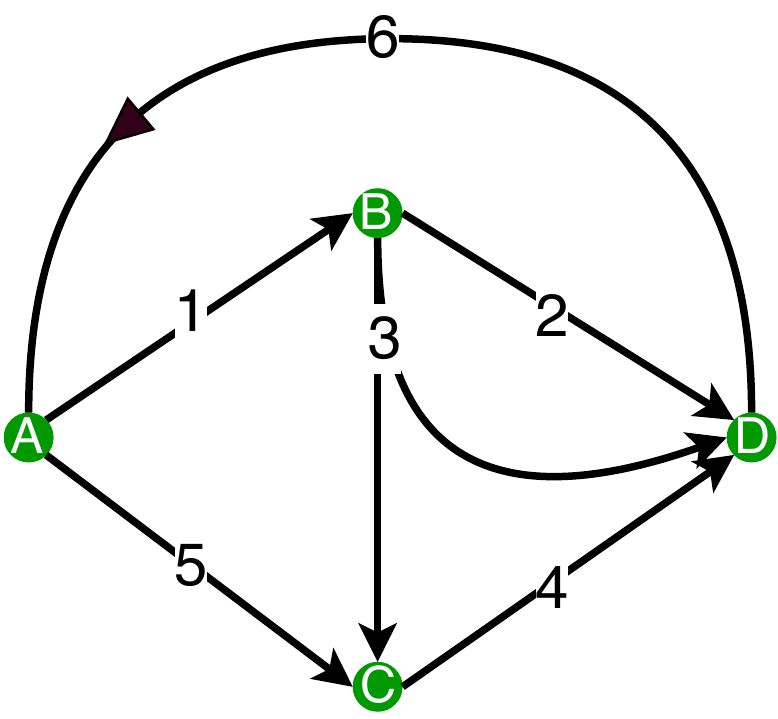}
    \caption{Hypergraph structure of MDP congestion game}
    \label{fig:MDPNetwork}
\end{figure}

Consider an MDP congestion game defined on hypergraph shown in Figure~\ref{fig:MDPNetwork}. We play the MDP congestion game defined by \eqref{eqn:mdpgame}, with a scaled mass $M = 1$. The cost functions are defined as $\ell_{sa}(y_{sa}) =  A_{sa} y_{sa} + b_{sa}$.
\begin{table}[h]
\centering
\begin{tabular}{@{}lll}
\toprule
         & $A_{sa}$       & $b_{sa}$                          \\
$\ell_1$ & 9              & 1           \\
$\ell_2$ & 0.1            & 1               \\
$\ell_3$ & 0.1            & 0               \\
$\ell_4$ & 9              & 1               \\
$\ell_5$ & 0.1            & 0.1           \\
$\ell_6$ & 0.1            & 0               \\ \bottomrule
\end{tabular}
\caption{Distribution dependent hyperarc costs}
\label{tab:linkCosts}
\end{table}

All state-action pairs correspond to hyperarcs, but all state-action pairs except for hyperarc $3$ define deterministic actions. The stochastic incidence matrix is defined by
\[E = \begin{pmatrix}
1& 0 & 0 & 0 &1 & -1 \\
-1 & 1 & 1 & 0 & 0 & 0 \\
0 & 0 & -0.9 & 1 & -1 & 0 \\
0 & -1 & -0.1& -1& 0 & 1 \\
\end{pmatrix}.\]
Note that when a hyperarc has one head state, its corresponding column of incidence matrix $E$ is identical to that of the cycle game incidence matrix $D$ (Section~\ref{sec:cycleGame}). Stochastic hyperarcs are convex combinations of the deterministic edges that correspond to allowable state transitions originating from the same tail state. 

We simulate each MDP congestion game by solving the convex optimization formulation given by~\eqref{eqn:mdpgame} with cvxpy. First, we verify in Figure~\ref{fig:mdpPositiveOptimalDistribution} that at given costs $\ell$, the optimal population distribution $y^\star$ is strictly positive. 
\begin{figure}[ht]
    \centering
    \includegraphics[trim=12 40 15 30, width=0.5\columnwidth]{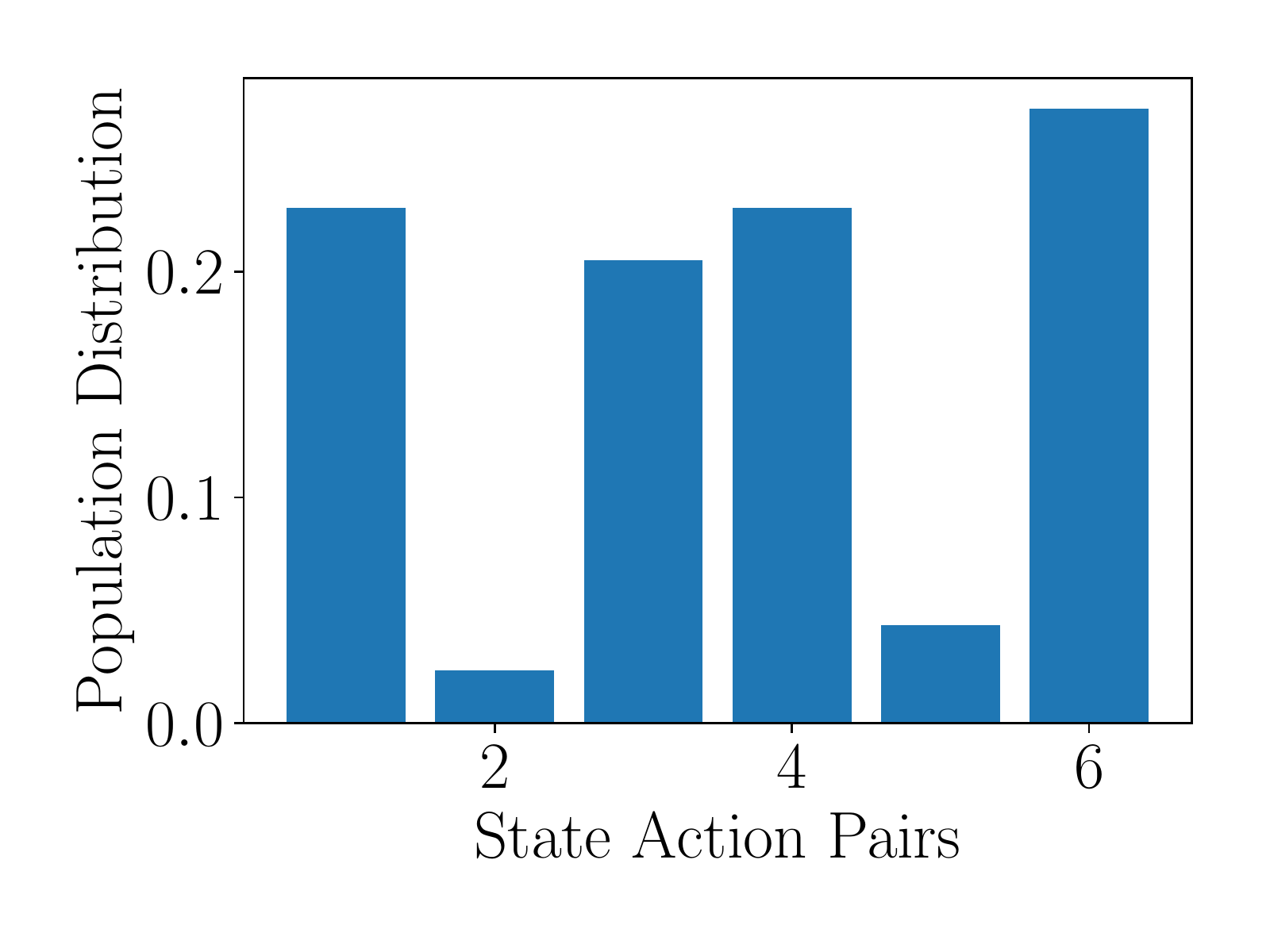}
    \caption{Optimal population distribution at with link costs from table~\ref{tab:linkCosts}}
    \label{fig:mdpPositiveOptimalDistribution}
\end{figure}

We consider perturbing the hyperarc costs modelled by $\bar{\ell}(\cdot,\epsilon) = \ell(\cdot) + \epsilon$.
Sensitivity of social cost can be analytically derived from Theorem~\ref{thm:sensitivity} based on the hypergraph structure as  
$\nabla_{\epsilon}J = \begin{pmatrix}0.023 & 0.501& -0.478 & 0.023& 0.454& 0.477\end{pmatrix}^T$.
\begin{figure}[ht]
    \centering
    \includegraphics[trim=0 35 0 30, width=0.8\columnwidth]{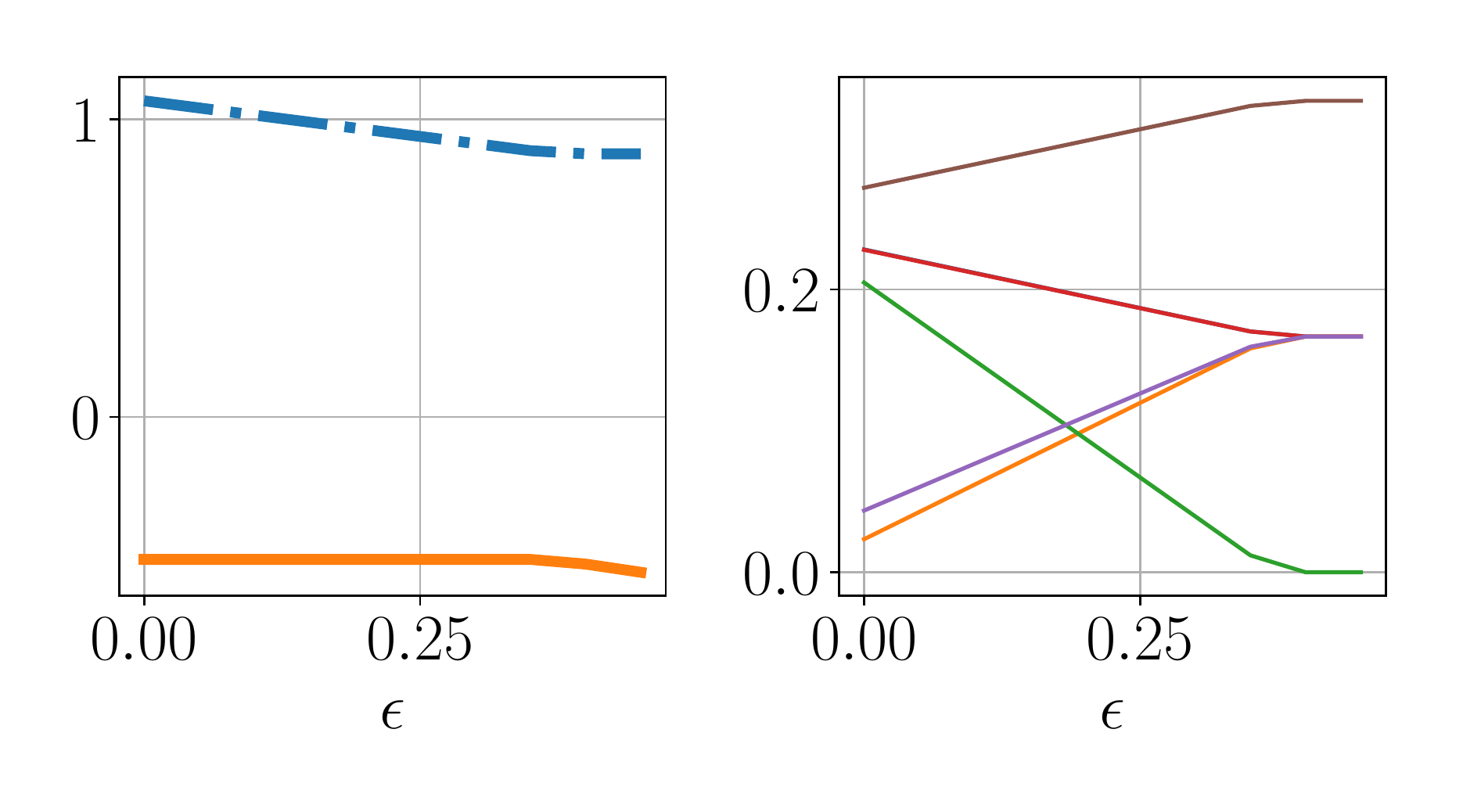}
    \caption{Braess Paradox: Perturbing game costs with $\epsilon[0,\, 0,\, 1,\, 0,\,0,\,0]$, where $\epsilon \in \reals_+$ increases along x-axis. Right shows the game optimal population distribution on each hyperarc. Left shows the social cost at optimal population distribution (blue) and the sensitivity for hyperarc $3$ varying with $\epsilon$ (orange).}
    \label{fig:braessParadox}
\end{figure}
The sensitivity vector $\nabla_{\epsilon}J$ implies that increasing the third hyperarc cost would result in the most decrease in social cost, while increasing the second hyperarc cost would result in the most increasing in social cost. We verify both scenarios by successively increasing $\epsilon$ and re-evaluating the social cost at the optimal population distribution $y^\star(\epsilon)$, as solved by cvxpy. The results are shown in Figures~\ref{fig:braessParadox} and \ref{fig:noParadox}.
\begin{figure}[ht]
    \centering
    \includegraphics[trim=0 35 0 20, width=0.83\columnwidth]{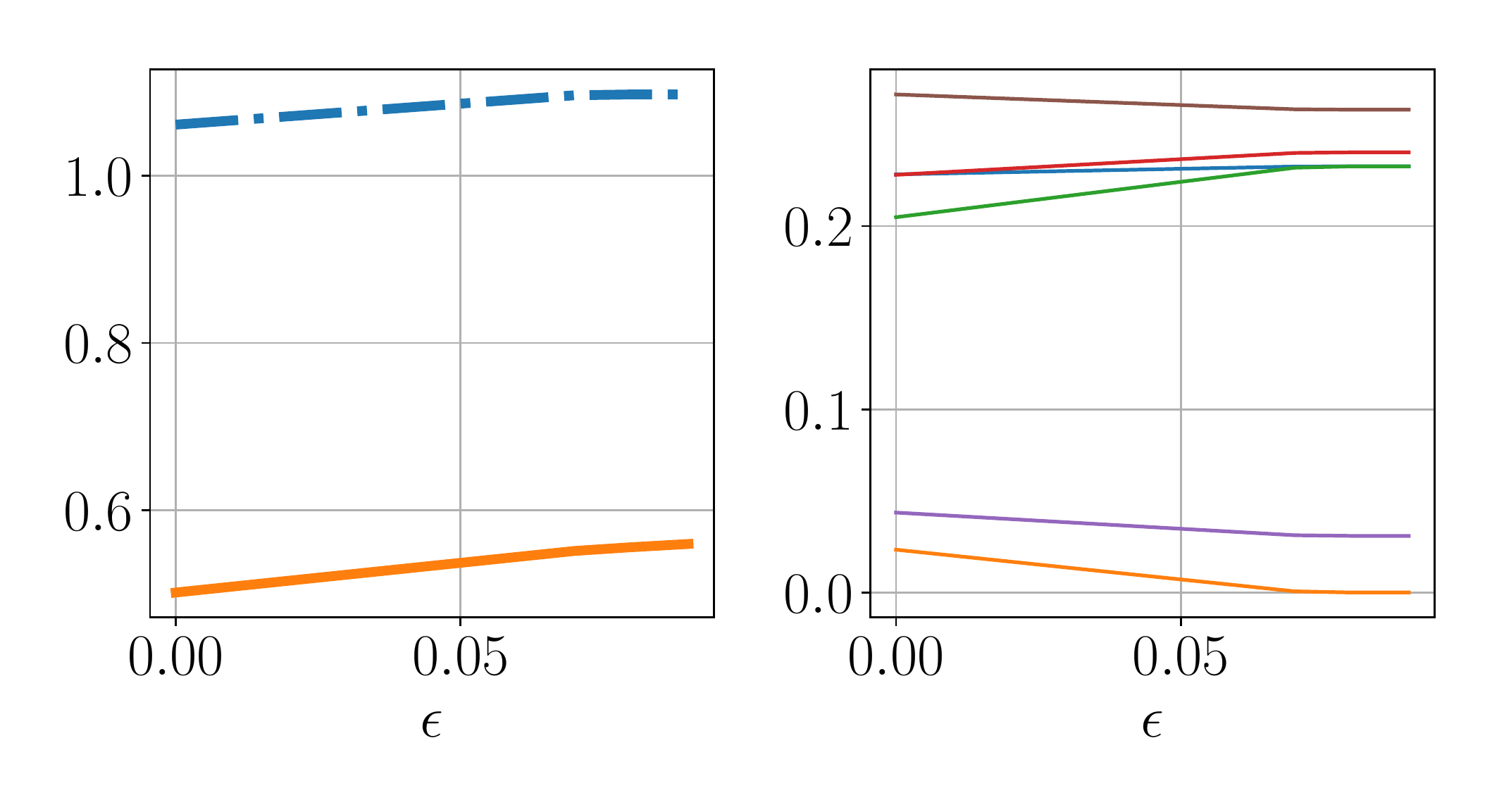}
    \caption{No Braess Paradox: Perturbing the game costs with $\epsilon[0,\, 1,\, 0,\, 0,\,0,\,0]$, where $\epsilon \in \reals_+$ increases along x-axis. Right shows the game optimal population distribution on each hyperarc. Left shows the social cost at optimal population distribution (blue) and the sensitivity value for hyperarc $2$ at given $\epsilon$ (orange).}
    \label{fig:noParadox}
\end{figure}

A couple conclusions can be drawn from Figures~\ref{fig:braessParadox} and \ref{fig:noParadox}. First, we see that there exists a continuous region around $\epsilon$ where $y^\star(\epsilon) >0$, and therefore renders this sensitivity analysis valid. Figure~\ref{fig:braessParadox} shows a negative sensitivity value for the third hyperarc as we increase $\epsilon$, which implies stochastic Braess paradox. Then as predicted, the social cost \emph{decreases} as $\epsilon$ is increased. In contrast, Figure~\ref{fig:noParadox} shows a positive sensitivity value for the second hyperarc as we increase $\epsilon$, therefore the social cost should not decrease as $\epsilon$ increases. This is also confirmed as the social cost obtained from the output of cvxpy increases with $\epsilon$. Both Braess paradox and the absence of Braess paradox is correctly predicted for the regions where positive mass exists on every hyperarc. 

\section{Conclusions}\label{conclusion}
We derived sensitivity analysis for MDP congestion games when the optimal population distribution is strictly positive. From the sensitivity of optimal cost and population distribution to changes in state-action cost, we derived sufficient conditions for the occurrence of stochastic Braess paradox defined in terms of network and cost structure. Finally, we considered effects of stochasticity on the magnitude of Braess paradox. Our simulations explicitly show the occurrence of stochastic Braess paradox on MDP congestion games. Future work include generalizing the analysis to MDP congestion games whose optimal population distribution is not strictly positive.

\bibliographystyle{IEEEtran}
\bibliography{reference}

\begin{thebibliography}{10}
\providecommand{\url}[1]{#1}
\csname url@samestyle\endcsname
\providecommand{\newblock}{\relax}
\providecommand{\bibinfo}[2]{#2}
\providecommand{\BIBentrySTDinterwordspacing}{\spaceskip=0pt\relax}
\providecommand{\BIBentryALTinterwordstretchfactor}{4}
\providecommand{\BIBentryALTinterwordspacing}{\spaceskip=\fontdimen2\font plus
\BIBentryALTinterwordstretchfactor\fontdimen3\font minus
  \fontdimen4\font\relax}
\providecommand{\BIBforeignlanguage}[2]{{%
\expandafter\ifx\csname l@#1\endcsname\relax
\typeout{** WARNING: IEEEtran.bst: No hyphenation pattern has been}%
\typeout{** loaded for the language `#1'. Using the pattern for}%
\typeout{** the default language instead.}%
\else
\language=\csname l@#1\endcsname
\fi
#2}}
\providecommand{\BIBdecl}{\relax}
\BIBdecl

\bibitem{calderone2017markov}
D.~Calderone and S.~S. Sastry, ``Markov decision process routing games,'' in
  \emph{Proc. Int. Conf. Cyber-Physical Syst.}\hskip 1em plus 0.5em minus
  0.4em\relax ACM, 2017, pp. 273--279.

\bibitem{calderone2017infinite}
D.~Calderone and S.~Shankar, ``Infinite-horizon average-cost markov decision
  process routing games,'' in \emph{Proc. Intell. Transp. Syst.}\hskip 1em plus
  0.5em minus 0.4em\relax IEEE, 2017, pp. 1--6.

\bibitem{shiau2009optimal}
C.-S.~N. Shiau and J.~J. Michalek, ``Optimal product design under price
  competition,'' \emph{J. Mech. Design}, vol. 131, no.~7, p. 071003, 2009.

\bibitem{wardrop1952}
J.~G. Wardrop, ``Some theoretical aspects of road traffic research,'' in
  \emph{Inst. Civil Engineers Proc. London/UK/}, 1952.

\bibitem{beckmann1952continuous}
M.~Beckmann, ``A continuous model of transportation,'' \emph{Econometrica}, pp.
  643--660, 1952.

\bibitem{patriksson2015traffic}
M.~Patriksson, \emph{The traffic assignment problem: models and methods}.\hskip
  1em plus 0.5em minus 0.4em\relax Courier Dover Publications, 2015.

\bibitem{shapley1953stochastic}
L.~S. Shapley, ``Stochastic games,'' \emph{Proc. Nat. Acad. Sci.}, vol.~39,
  no.~10, pp. 1095--1100, 1953.

\bibitem{mertens1981stochastic}
J.-F. Mertens and A.~Neyman, ``Stochastic games,'' \emph{Int. J. Game Theory},
  vol.~10, no.~2, pp. 53--66, 1981.

\bibitem{Tobin1988}
R.~Tobin and T.~Friesz, ``{Sensitivity analysis for equilibrium network
  flow},'' \emph{Transp. Sci.}, vol.~22, no.~4, pp. 242--250, 1988.

\bibitem{qiu1992sensitivity}
Y.~Qiu and T.~L. Magnanti, ``Sensitivity analysis for variational
  inequalities,'' \emph{Math. Op. Res.}, vol.~17, no.~1, pp. 61--76, 1992.

\bibitem{Patriksson2004}
M.~Patriksson, ``Sensitivity analysis of traffic equilibria,'' \emph{Transp.
  Sci.}, vol.~38, no.~3, pp. 258--281, 2004.

\bibitem{yamada2015freight}
T.~Yamada and Z.~Febri, ``Freight transport network design using particle swarm
  optimisation in supply chain--transport supernetwork equilibrium,''
  \emph{Transp. Res.}, vol.~75, pp. 164--187, 2015.

\bibitem{bar2013computational}
H.~Bar-Gera, F.~Hellman, and M.~Patriksson, ``Computational precision of
  traffic equilibria sensitivities in automatic network design and road
  pricing,'' \emph{Procedia-Social and Behav. Sci.}, vol.~80, pp. 41--60, 2013.

\bibitem{braess1968paradoxon}
D.~Braess, ``{\ U} about a paradox of traffic planning,'' \emph{Op. Res.},
  vol.~12, no.~1, pp. 258--268, 1968.

\bibitem{milchtaich2006network}
I.~Milchtaich, ``Network topology and the efficiency of equilibrium,''
  \emph{Games and Econ. Behav.}, vol.~57, no.~2, pp. 321--346, 2006.

\bibitem{ukkusuri2007robust}
S.~V. Ukkusuri, T.~V. Mathew, and S.~T. Waller, ``Robust transportation network
  design under demand uncertainty,'' \emph{Comput.-Aided Civil and
  Infrastructure Eng.}, vol.~22, no.~1, pp. 6--18, 2007.

\bibitem{liu2015global}
H.~Liu and D.~Z. Wang, ``Global optimization method for network design problem
  with stochastic user equilibrium,'' \emph{Transp. Res.}, vol.~72, pp. 20--39,
  2015.

\bibitem{clark2002sensitivity}
S.~D. Clark and D.~P. Watling, ``Sensitivity analysis of the probit-based
  stochastic user equilibrium assignment model,'' \emph{Transp. Res.}, vol.~36,
  no.~7, pp. 617--635, 2002.

\bibitem{parise2019variational}
F.~Parise and A.~Ozdaglar, ``A variational inequality framework for network
  games: Existence, uniqueness, convergence and sensitivity analysis,''
  \emph{Games and Econ. Behav.}, 2019.

\bibitem{altman1999constrained}
E.~Altman, \emph{Constrained Markov decision processes}.\hskip 1em plus 0.5em
  minus 0.4em\relax CRC Press, 1999, vol.~7.

\bibitem{bureau1964traffic}
B.~of~Public~Roads, ``Traffic assignment manual,'' \emph{US Dept. of Commerce},
  1964.

\bibitem{epstein2009efficient}
A.~Epstein, M.~Feldman, and Y.~Mansour, ``Efficient graph topologies in network
  routing games,'' \emph{Games and Econ. Behav.}, vol.~66, no.~1, pp. 115--125,
  2009.

\bibitem{gallo1993directed}
G.~Gallo, G.~Longo, S.~Pallottino, and S.~Nguyen, ``Directed hypergraphs and
  applications,'' \emph{Discrete Applied Math.}, vol.~42, no. 2-3, pp.
  177--201, 1993.

\bibitem{dontchev2009implicit}
A.~L. Dontchev and R.~T. Rockafellar, ``Implicit functions and solution
  mappings,'' \emph{Springer Monographs in Math.}, vol. 208, 2009.

\bibitem{adler2007hypertree}
I.~Adler, G.~Gottlob, and M.~Grohe, ``Hypertree width and related hypergraph
  invariants,'' \emph{Eur. J. of Combinatorics}, vol.~28, no.~8, pp.
  2167--2181, 2007.

\bibitem{godsil2001cuts}
C.~Godsil and G.~Royle, ``Cuts and flows,'' in \emph{Algebraic Graph
  Theory}.\hskip 1em plus 0.5em minus 0.4em\relax Springer, 2001, pp. 307--339.

\end{thebibliography}

\end{document}